\newtheorem{theorem}{Theorem}
\theoremstyle{definition}
\begin{document}

\title{How to Cut a Cake Fairly: \\ A Generalization to Groups}
\markright{Notes}
\author{Erel Segal-Halevi and Warut Suksompong}

\maketitle

\begin{abstract}
A fundamental result in cake cutting states that for any number of players with arbitrary preferences over a cake, there exists a division of the cake such that every player receives a single contiguous piece and no player is left envious.
We generalize this result by showing that it is possible to partition the players into groups of any desired sizes and divide the cake among the groups so that each group receives a single contiguous piece, and no player finds the piece of another group better than that of the player's own group.
\end{abstract}

\section{Introduction.}

In a small town, there is a public basketball court in which ten players (two teams of five players each) can play at a time. 
On a certain Sunday, 30 players want to play on the court, and they have different preferences regarding the time of day at which they prefer to play.
The court manager therefore needs to partition the players into three groups of ten players each, 
and divide the time of day into three contiguous intervals---one interval per group---so that each group of ten can play in its designated time slot.
To avoid complaints, the manager would like the partition to be \emph{envy-free}: every player should believe that the time slot in which he or she plays is at least as good as each of the time slots given to the other two groups. 

In this note, we show that no matter what the players' preferences on the time of day are, there always exists a partition of the players together with an envy-free division of the time into contiguous time slots.

\section{Cake Cutting.}

Before we describe our setting formally, let us take a step back and ask a simpler question: Instead of a group activity like basketball, what if we are trying to divide the time for an individual activity, say running on a treadmill?
In this case, there is no need to partition the players, since only one player will use the treadmill at a time.
The challenge thus lies in dividing the time and assigning the resulting time slots to the players in such a way that no player envies another.

It turns out that this simpler setting falls precisely under the well-studied framework of \emph{cake cutting}.\footnote{Several excellent surveys on the subject are available, e.g., \cite{BramsTa96,RobertsonWe98}.}
The cake serves as a metaphor for resources such as time or land, and the aim is to divide the cake between players in a fair manner.
Formally, there are $n$ players, and the cake is represented by an interval of length $1$.
Each (contiguous) $n$-partition of the cake can be defined by an $n$-tuple $(x_1,x_2,\dots,x_n)$ with $\sum_{i=1}^n x_i = 1$, where $x_i\geq 0$ denotes the length of the $i$th piece from the left.
Given a partition, we say that a player \emph{prefers} a certain piece if the player thinks there is no better piece in the partition; note that this preference can depend on the entire partition.
Each player prefers at least one piece in any given partition, and may prefer more than one piece in case of ties.

The following two (weak) assumptions on the players' preferences are standard in the literature~\cite{Stromquist80,Su99}:
\begin{enumerate}
\item \emph{Hungry players.} Players never prefer an empty piece.
\item \emph{Closed preference sets.} Any piece that is preferred for a convergent sequence of partitions is also preferred at the limiting partition.
\end{enumerate}

A seminal result in cake cutting is that, as long as these two assumptions are satisfied, an envy-free division of the cake always exists. 
It was proved some decades ago in this \textsc{Monthly} by Stromquist~\cite{Stromquist80} via topological arguments and Su~\cite{Su99} using Sperner's lemma, and generalizes earlier results by Dubins and Spanier~\cite{DubinsSp61}, also in this \textsc{Monthly}, and Woodall~\cite{Woodall1980Dividing}. 
\begin{theorem}[\cite{Stromquist80,Su99}]
\label{thm:stromquist}
Let $n$ be any positive integer.
For $n$ hungry players with closed preference sets, there is always a partition of the cake into $n$ contiguous pieces such that each player prefers a different piece.
\end{theorem}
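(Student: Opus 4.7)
My plan is to follow Su's approach via Sperner's lemma. The space of contiguous $n$-partitions is naturally parametrized by the standard simplex
$\Delta = \{(x_1,\dots,x_n) : x_i \geq 0,\ \sum_i x_i = 1\}$,
and the face $\{x_i = 0\}$ consists exactly of those partitions in which the $i$th piece is empty. Fixing a fine triangulation $T$ of $\Delta$, I set up a Sperner-style labeling of its vertices in two stages. First, assign each vertex $v$ of $T$ an \emph{owner} $o(v) \in \{1,\dots,n\}$ via a combinatorial scheme (for instance, Su's assignment built on iterated barycentric subdivision) so that the $n$ vertices of every elementary simplex of $T$ carry all $n$ distinct owners. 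Then label $v$ by $\ell(v) \in \{1,\dots,n\}$, the index of some piece that player $o(v)$ prefers in the partition $v$.

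The key verification is the Sperner boundary condition: $\ell(v) \neq i$ whenever $x_i = 0$. This is precisely where the hungry-players assumption enters, since a player never prefers an empty piece, and therefore the $i$th piece cannot be a preferred choice on the face $\{x_i=0\}$. Sperner's lemma then produces an elementary simplex of $T$ whose vertices carry all $n$ labels; because these vertices also carry all $n$ owners, this yields $n$ partitions $v_1,\dots,v_n$ in $\Delta$, one per player, such that player $j$ prefers a distinct piece $\pi(j)$ in the partition $v_j$, with all $v_j$ lying within the mesh of $T$.

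Next, I repeat the construction along a sequence of triangulations whose mesh tends to zero. By compactness of $\Delta$ and finiteness of the set of permutations of $\{1,\dots,n\}$, I pass to a subsequence along which the permutations stabilize to a single $\pi$ and all the $v_j^{(k)}$ converge to a common limit $v^*$. The closed-preference-set assumption then guarantees that in the partition $v^*$ each player $j$ still prefers piece $\pi(j)$, which is exactly the desired envy-free partition.

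I expect the main obstacle to be the owner-labeling step: one must actually produce a triangulation together with an owner assignment such that every elementary simplex sees all $n$ owners. Su's iterated-barycentric construction achieves exactly this, but it is the genuinely combinatorial heart of the argument and is not automatic from the triangulation alone; once it is in hand, the remainder is packaging via Sperner's lemma together with a compactness argument that uses closed preference sets to pass from approximate to exact envy-freeness.
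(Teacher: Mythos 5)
The paper does not prove this statement---it is quoted as a known result with citations to Stromquist and Su---so the relevant comparison is with the cited proofs, and your sketch is a faithful and correct reconstruction of Su's Sperner-lemma argument: the boundary condition $\ell(v)\neq i$ on the face $\{x_i=0\}$ is exactly the standard Sperner hypothesis and is correctly discharged by the hungry-players assumption, and the limiting argument via compactness, finiteness of permutations, and closed preference sets is sound. The one step you defer, an owner assignment on a triangulation of arbitrarily small mesh in which every elementary simplex sees all $n$ owners, is indeed the genuine combinatorial content; you are right that it is not automatic, but it is exactly what Su's iterated-barycentric construction provides, so deferring to it is legitimate for a result the paper itself treats as classical.
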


Theorem~\ref{thm:stromquist} is an existence result, so the reader may naturally wonder how one could compute a desired solution.
When there are two players, a contiguous envy-free division can be found using the so-called ``cut-and-choose'' protocol: the first player cuts the cake into two pieces so that she prefers both of them (this is always possible due to our two assumptions), and the second player chooses a piece that she prefers.
For any number of players, however, the problem becomes surprisingly difficult---for example, with three players, Stromquist~\cite{Stromquist08} showed that no finite algorithm can always compute such a division.
We refer the reader to \cite{GoldbergHoSu20} and the references therein for more details.

\section{Our Result.}

Let us now return to our basketball court example and introduce a generalization of the cake-cutting model that captures it.
Again, there are $n$ players who have preferences over the cake.
However, unlike in canonical cake cutting, we are also given positive integers $k_1,k_2,\dots,k_m$ whose sum is $n$, and the players should be divided into $m$ groups with group $j$ containing $k_j$ players.
The cake is then partitioned into $m$ pieces, and group $j$ is assigned the $j$th piece from the left.
Our example thus corresponds to the case where $n=30$, $m=3$, and $k_1=k_2=k_3=10$.
In general, the group sizes may be different---for example, if only $28$ players come to the court, the manager may decide to split them into two groups of ten and one group of eight.

A first idea that comes to mind for proving the existence of an envy-free allocation is to apply Theorem~\ref{thm:stromquist} to the individual players, and then group them according to their time slots.
Specifically, given an envy-free allocation to the individuals such that player $i$ gets piece $i$ in the $n$-partition $(x_1,x_2,\ldots,x_n)$,
we construct groups $G_1=\{1,2,\ldots,k_1\}$,
$G_2=\{k_1+1,k_1+2,\ldots,k_1+k_2\}$, and so on.
We then take the $m$-partition $(y_1,y_2,\ldots,y_m)$
where $y_j = \sum_{i\in G_j}x_i$,
and allocate the $j$th piece from the left to group $G_j$.
Unfortunately, this idea does not work---the resulting allocation might not be envy-free. 
Indeed, going back to our basketball example, it is possible that player 7 prefers piece 7 to all other 29 pieces in the partition $(x_1,x_2,\ldots,x_{30})$, but does not prefer the union of pieces 1, 2, $\dots$, 10 to the union of pieces 11, 12, $\dots$, 20 in the partition $(y_1,y_2,y_3)$.

Nevertheless, we establish below that an envy-free allocation is guaranteed to exist in this general setting.

\begin{theorem}
\label{thm:general}
Let $n\geq m$ be any positive integers, and let $k_1,k_2,\dots,k_m$ be positive integers such that $\sum_{j=1}^m k_j = n$.
For $n$ hungry players with closed preference sets, there is always a partition of the cake into $m$ contiguous pieces, along with a division of the players into $m$ groups with group $j$ containing $k_j$ players, such that each player in group $j$ prefers the $j$th piece from the left in the partition.
\end{theorem}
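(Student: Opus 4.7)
My plan is to adapt Su's Sperner's-lemma proof of Theorem~\ref{thm:stromquist} so that it operates directly on $m$-partitions. Let $\Delta$ be the $(m{-}1)$-simplex of contiguous $m$-partitions $P=(y_1,\ldots,y_m)$. For each $P$ and each player $p$, write $\mathrm{Pref}_p(P)\subseteq\{1,\ldots,m\}$ for the set of pieces preferred by $p$ at $P$.

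A helpful reformulation is via Hall's marriage theorem. Build the bipartite graph in which each player is connected to each piece in their preferred set, with piece $i$ given capacity $k_i$. The desired group allocation exists at $P$ if and only if Hall's condition holds at $P$: for every $S\subseteq\{1,\ldots,m\}$, the number of players with at least one preferred piece in $S$ is at least $\sum_{i\in S}k_i$. Thus it suffices to exhibit some $P^*\in\Delta$ at which Hall's condition is satisfied, and then invoke Hall's theorem to produce the player-to-group assignment.

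To produce such a $P^*$, I would take a very fine triangulation of $\Delta$, assign each vertex an ``owner'' player following the balanced-ownership scheme Su uses, and have each owner label their vertex with one of their preferred pieces at that $m$-partition. Hungriness gives the Sperner boundary condition (no owner ever labels a vertex with a piece of length zero), while the closed preference sets assumption allows us to pass to the limit as the mesh shrinks and deduce the existence of a partition with the desired combinatorial property in the limit.

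The main obstacle is that a straight-up Sperner argument on the $(m{-}1)$-simplex produces only a ``rainbow'' elementary simplex in which each of the $m$ pieces is preferred by one distinct owner, whereas we need $k_i$ distinct players preferring piece $i$. I expect the resolution to come either from a suitably weighted Sperner-style lemma, from the polytopal KKM theorem applied to the closed cover $\{P:\text{player }p\text{ prefers piece }i\}$, or from an induction on $m$ that peels off one group at a time via a moving-knife construction. Working out the exact combinatorial mechanism that delivers the multiplicities $k_1,\ldots,k_m$ in the fully labeled cell---and verifying that the limit partition really does satisfy the Hall condition rather than some weaker covering property---is the step I would expect to demand the most care.
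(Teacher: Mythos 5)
Your plan correctly isolates the crux---getting $k_i$ distinct players to prefer piece $i$, rather than just one player per piece---but then leaves exactly that step unresolved: you list three candidate mechanisms (a weighted Sperner-type lemma, a polytopal KKM argument, an induction peeling off one group at a time) without establishing any of them, so the argument has a hole precisely where the new content of the theorem lies. Your first instinct is sound: the multilabeled Sperner lemma of Meunier and Su does produce a fully labeled cell with prescribed label multiplicities, and the paper itself remarks that Theorem~\ref{thm:general} can be derived that way; but as written your proof establishes nothing beyond Theorem~\ref{thm:stromquist} restated on an $(m-1)$-simplex. The Hall's-theorem reformulation, while a correct description of the target condition at a fixed partition, gives no leverage toward producing a partition at which that condition holds---and if you did obtain the multilabeled cell, you would not need Hall at all: passing to a subsequence along which the owner-to-label assignment is constant already hands you the groups directly.

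The paper sidesteps the combinatorial difficulty entirely by reducing to Theorem~\ref{thm:stromquist} with $n$ players and $n$ pieces. For each player $t$ with demand function $f_t$ on $m$-partitions, define an auxiliary demand function $g_t$ on $n$-partitions $\mathbf{x}$: merge the pieces of $\mathbf{x}$ in blocks of sizes $k_1,\ldots,k_m$ to obtain an $m$-partition $\mathbf{y}$, and let $g_t(\mathbf{x})$ consist of the longest subintervals of $\mathbf{x}$ lying inside the pieces of $\mathbf{y}$ that $t$ prefers. One checks that $g_t$ inherits hungriness (a preferred piece of $\mathbf{y}$ is nonempty, hence its longest subinterval is nonempty) and closedness, so Theorem~\ref{thm:stromquist} yields an $n$-partition $\mathbf{x}^*$ in which the $n$ players prefer $n$ distinct pieces under the $g_t$. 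The $k_j$ players assigned to the subintervals forming block $j$ must, by the construction of $g$, all prefer piece $j$ of the merged partition---which is exactly the multiplicity-$k_j$ conclusion your outline is missing. To complete your write-up you would need either to carry out the Meunier--Su route in full or to adopt this reduction.
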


Note that Theorem~\ref{thm:general} is a strict generalization of Theorem~\ref{thm:stromquist}: taking $n=m$ and $k_j=1$ for all $j$ reduces the former theorem to the latter.
Before we prove Theorem~\ref{thm:general}, a few remarks are in order.

If the partition of the players into groups is fixed in advance (in contrast to our setting), then an envy-free allocation is no longer guaranteed to exist. For example, suppose there are two groups, each of which contains a player who prefers to play in the morning (specifically, only prefers the piece containing the interval $[0,0.1]$ if such a piece exists in the partition) and a player who prefers to play in the evening (specifically, only prefers the piece containing the interval $[0.9,1]$ if such a piece exists in the partition). Regardless of how the interval $[0,1]$ is partitioned and assigned to the two groups, it is not hard to see that at least one player will be envious.

In light of this negative result, relaxations have been proposed for the fixed-group setting, for instance allowing more than one connected piece per group~\cite{SegalhaleviNi19} or ensuring fairness only for a certain fraction of the players in each group~\cite{bade2018fair,SegalhaleviSu19}. Both of these solutions are far from ideal in our basketball example.
Hence, the possibility of constructing ad hoc groups plays a crucial role in ensuring that each group receives a contiguous piece and all players are nonenvious.\footnote{Envy-freeness with variable groups as in our model has recently been studied
for different problems: allocating indivisible items between two groups \cite{KyropoulouSuVo19},
and dividing rooms and rent \cite{AzrieliSh14,GhodsiLaMo18}.
For fixed groups, fair division has been investigated using a probabilistic approach \cite{ManurangsiSu17} and with respect to other fairness notions \cite{Suksompong18}; see \cite{Suksompong18-2} for an overview.
}

Theorem~\ref{thm:general} can be shown by applying a generalized Sperner-type lemma similar to the one proved recently by Meunier and Su~\cite{meunier2019multilabeled}. 
However, here we present a simpler proof using Theorem~\ref{thm:stromquist}.

\begin{proof}[Proof of Theorem~\ref{thm:general}]
We first introduce some notation.
For every positive integer $t$, denote by $[t]$ the set $\{1,2,\ldots,t\}$.
For each $j\in[m]$,
denote by $K_j$ the sum of sizes of the first $j$ groups, i.e., $K_j := \sum_{j'=1}^j k_{j'}$, and let $K_0 := 0$.
Let $G_j := \{K_{j-1}+1,K_{j-1}+2,\ldots,K_j\}$.
Note that $|G_j| = k_j$ and $\cup_{j\in[m]}G_j = [n]$.

We are given $n$ hungry players. The preferences of each player $i\in[n]$ are represented by a \emph{demand function} $f_i$, which assigns to each $m$-partition a nonempty subset of $[m]$ representing the indices of the pieces that player $i$ prefers in that partition.

For each player $i\in[n]$, we construct a new demand function $g_i$, which
assigns to each $n$-partition $\mathbf{x} = (x_1,\ldots,x_n)$ a nonempty subset of $[n]$.
The function $g_i$ is constructed as follows.
\begin{itemize}
\item Given an $n$-partition $\mathbf{x}$, construct an $m$-partition $\mathbf{\widehat{x}}$ by uniting the $k_1$ leftmost intervals into a single interval, the next $k_2$ intervals into another single interval, and so on. Formally, for each $j\in[m]$, let $\widehat{x}_j := \sum_{i\in G_j} x_i$.
\item Find $f_i(\mathbf{\widehat{x}})$ --- the preferred piece(s) of player $i$ in the $m$-partition $\mathbf{\widehat{x}}$.
\item For every $j\in f_i(\mathbf{\widehat{x}})$, 
let $M_j := \arg\max_{i\in G_j}x_i$. That is, $M_j$ contains the indices of the interval(s) in $\mathbf{x}$ that are longest among the subintervals of interval $j$ in $\mathbf{\widehat{x}}$.
\item Let $g_i(\mathbf{x}) := \cup_{j\in f_i(\mathbf{\widehat{x}})} M_j$.
\end{itemize}
Going back to our basketball example, suppose that $f_i(\mathbf{\widehat{x}})=\{2\}$, so in partition~$\mathbf{\widehat{x}}$ player~$i$ prefers to play in the middle time slot.
Suppose that in partition $\mathbf{x}$ the longest among the pieces $11,12,\ldots,20$ are pieces $14$ and $17$. Then $g_i(\mathbf{x}) = \{14,17\}$.

Next, we show that $g_i$ satisfies the two assumptions that are necessary in order to apply Theorem~\ref{thm:stromquist}.
Since the function $f_i$ represents a hungry player,
$f_i(\mathbf{\widehat{x}})$ does not contain any empty piece. Hence, $g_i(\mathbf{x})$, too, does not contain any empty piece, so $g_i$ also represents a hungry player.

We now claim that $g_i$ satisfies the closed preference sets assumption.
To see this, consider a sequence $(\mathbf{x}^d)_{d\in \mathbb{N}}$ of $n$-partitions such that for all partitions in the sequence, $g_i(\mathbf{x}^d)$ contains some fixed $k\in[n]$.
Suppose that the sequence converges to a partition $\mathbf{x}^{\infty}$. To establish the claim, it suffices to show that $k\in g_i(\mathbf{x}^{\infty})$.

For each $d\in\mathbb{N}$, let $\mathbf{\widehat{x}}^d$ be the $m$-partition derived from $\mathbf{x}^d$ by uniting adjacent intervals as in the construction of $g_i$.
The sequence $(\mathbf{\widehat{x}}^d)_{d\in \mathbb{N}}$ converges to some $m$-partition $\mathbf{\widehat{x}}^{\infty}$.
The assumption that $k\in g_i(\mathbf{x}^d)$ implies that for all $d\in\mathbb{N}$:
\begin{enumerate}
\item[(1)] $j\in f_i(\mathbf{\widehat{x}}^d)$, where $j$ is the  unique index in $[m]$ for which $k\in G_j$;
\item[(2)] $x^d_k \geq x^d_{k'}$ for all pieces $k'\in G_j$.
\end{enumerate}
From (1) and since $f_i$ satisfies the closed preference sets assumption, we have $j\in f_i(\mathbf{\widehat{x}}^{\infty})$.
In addition, it follows from (2) that 
$x^{\infty}_k \geq x^{\infty}_{k'}$ for all $k'\in G_j$.
The construction of $g_i$ now implies that $k\in g_i(\mathbf{x}^{\infty})$, as claimed.

We have shown that the preferences represented by $(g_i)_{i\in[n]}$ satisfy the two requirements of Theorem \ref{thm:stromquist}. 
Therefore, there exists an $n$-partition $\mathbf{y}$ in which each player $i\in[n]$ ``prefers'' a different piece according to $g_i$. 
Let $\pi:[n]\rightarrow [n]$ be a bijection such that player $i$ prefers piece $\pi(i)$. In particular, we have $\pi(i)\in g_i(\mathbf{y})$.

Finally, let $\mathbf{\widehat{y}}$ be the $m$-partition derived from $\mathbf{y}$ by uniting adjacent intervals as in the construction of $\mathbf{\widehat{x}}$. 
For each $j\in [m]$, allocate the $j$th piece in $\mathbf{\widehat{y}}$ to group $\pi^{-1}(G_j)$. 
For every player $i\in \pi^{-1}(G_j)$,
since $\pi(i)\in g_i(\mathbf{y})$, 
the construction of $g_i$
implies that 
$j\in f_i(\mathbf{\widehat{y}})$, that is, player $i$ prefers piece $j$ in $\mathbf{\widehat{y}}$.
Hence the constructed $m$-allocation $\mathbf{\widehat{y}}$ is envy-free for the original players.
\end{proof}

\section{Concluding Remarks.}
Theorem \ref{thm:general}
allows 
the players' preferences to depend on the group size. For instance, suppose $m=2$, $k_1=8$, and $k_2=10$. Then, given a partition, each player may decide whether he or she prefers to play in the earlier time slot in a group of eight, or in the later time slot in a group of ten.\footnote{A related problem is the \emph{group activity selection problem}, where each player also has a preference over certain activities and the number of players involved in each activity~\cite{DarmannElKu12}. However, in our setting a player's preference depends in addition on the ``resource'' to which his or her activity is allocated.}
The resulting allocation respects these preferences as long as they satisfy the closed preference sets assumption and the hungry players assumption---in particular, if the earlier piece is empty, then every player must prefer the later piece, and vice versa, regardless of the group sizes.

However, the theorem does \emph{not} allow the players' preferences to depend on the identity of the other players assigned to the same group.
In fact, with this extension, the existence guarantee ceases to hold.
For instance, if there is a popular player, say Alice, such that all other players prefer to play with Alice than not, then the players who are assigned to a different group than hers will necessarily be envious.

For $m\geq 3$ groups, the result of Stromquist~\cite{Stromquist08} implies that no finite protocol can find a contiguous envy-free division even in the simplest case in which each group consists of a single player.
However, for $m=2$ groups (and $n = k_1+k_2$ players), 
a simple protocol finds a contiguous envy-free division, as long as the players' preferences satisfy a third condition (in addition to ``hungry players'' and ``closed preference sets''):
\begin{enumerate}
    \item[3.] \emph{Monotone preferences.} Let $P_1$ and $P_2$ be partitions such that for some $\ell\in [m]$, the $\ell$th piece of $P_1$ is contained in the $\ell$th piece of $P_2$, while for every $j\in [m]$ such that $j\neq \ell$, the $j$th piece of $P_2$ is contained in the $j$th piece of $P_1$. Then, if a player prefers the $\ell$th piece of $P_1$, he or she also prefers the $\ell$th piece of $P_2$.
\end{enumerate}
Note that Stromquist's impossibility result for three groups is valid even with monotone preferences. The protocol for two groups works as follows:
\begin{itemize}
    \item Let each player $i\in[n]$ mark a point $x_i\in[0,1]$ so that the player prefers both $[0,x_i]$ and $[x_i,1]$. Such a point must exist by the hungry players and closed preference sets assumptions.
    \item Order the marks from left to right, breaking ties arbitrarily. 
    \item Cut the cake at some point $y$ between the $k_1$th mark and the $(k_1+1)$st mark from the left.
    \item Assign the piece $[0,y]$ to the players who made the $k_1$ leftmost marks, and the piece $[y,1]$ to the remaining $k_2$ players. 
    The monotone preferences assumption guarantees that all agents prefer their allocated piece.
\end{itemize}
The problem of computing an envy-free allocation without the monotone preferences assumption remains for future work.


\begin{affil}
Department of Computer Science, Ariel University, Israel \\
erelsgl@gmail.com
\end{affil}

\begin{affil}
Department of Computer Science, University of Oxford, UK \\
warut.suksompong@cs.ox.ac.uk
\end{affil}

\end{document}